\title{
Recursive algorithm for the control of output remnant of Preisach hysteresis operator
}
\author{M. A. Vasquez-Beltran, B. Jayawardhana, R. Peletier
    \thanks{*This paper is based on research developed in the DSSC Doctoral Training Programme, co-funded through a Marie Skłodowska-Curie COFUND (DSSC 754315).} 
    \thanks{ $^{1}$M. A. Vasquez Beltran and B. Jayawardhana are with the Engineering and Technology Institute Groningen, Faculty of Science and Engineering, University of Groningen, 9747AG Groningen, The Netherlands {\tt\small \{m.a.vasquez.beltran;b.jayawardhana\} @rug.nl} }%
    \thanks{ $^{2}$R. Peletier is with the Kapteyn Astronomical Institute, Faculty of Science and Engineering, University of Groningen, 9747AG Groningen, The Netherlands {\tt\small r.peletier@rug.nl} }%
}
\newenvironment{proof}{\vspace{.1cm}\noindent{\sc Proof.   }\hspace{0.05cm}\,\,}{$\hfill\Box$\vspace{.1cm}}
\newtheorem{theorem}            {Theorem}[section]
\newtheorem{lemma}              [theorem]{Lemma} 
\newtheorem{proposition}		[theorem]{Proposition}
\newcommand{\dd}{{\rm d}\hbox{\hskip 0.5pt}}
\newcommand{\R}{{\mathbb R}}
\newcommand{\Z}{{\mathbb Z}}
\newcommand\intdl{\displaystyle\int\displaylimits}
\newcommand\iintdl{\displaystyle\iint\displaylimits}
\begin{document}

\pagestyle{empty}
\maketitle
\thispagestyle{empty}


\begin{abstract}

We study in this paper the control of hysteresis-based actuator systems where its remanence behavior (e.g., the remaining memory when the actuation signal is set to zero) must follow a desired reference point. 
We present a recursive algorithm for the output regulation of the hysteresis remnant behavior described by Preisach operators. Under some mild conditions, we prove that our proposed algorithm guarantees that the output remnant converges to a desired value. Simulation result shows the efficacy of our proposed algorithm.

\end{abstract}

\begin{IEEEkeywords} 
    Mechatronics, Control applications, Iterative learning control
\end{IEEEkeywords} 

\section{INTRODUCTION}
\IEEEPARstart{H}{ysteresis} is a complex non-linear behavior with particular memory characteristics and it is present in many physical systems such as shape memory alloys, mechanical systems with friction, and ferromagnetic and ferroelectric materials. Its influence becomes crucial and important when they are used in high-precision engineering systems. 

Hysteresis can occur as a quasi-static (rate-independent) or dynamic (rate-dependent) non-linear phenomenon which can mathematically be described by non-smooth integro-differential equations such as the Duhem hysteresis model \cite{Ikhouane2018}, infinite-dimensional operators such as the Preisach operator \cite{Preisach1935} or the combination thereof such as the Prandtl-Ishlinskii operator \cite{AlJanaideh2018}.  
Mathematical expositions of these hysteresis operators can be found, among many others, in  \cite{Brokate1996, Mayergoyz2003, Visintin1994, Hassani2014, Ismail2009}.

In the literature of systems and control theory, a number of methods have been proposed and studied to control nonlinear systems containing hysteretic sub-systems that can be described by one of the aforementioned hysteresis models. 
For instance, when the hysteretic element can be modeled by a classical (rate-independent) Preisach operator, a standard brute-force approach involves the identification and the use of inverse model that can approximately cancel the hysteresis non-linearity when it is connected in cascade \cite{Iyer2005}. 
An approach based on a multiplicative structure which does not require a direct inversion of a rate-dependent version of the Prandtl–Ishlinskii operator is presented in \cite{AlJanaideh2018}.
Other approaches exploit particular systems' properties and structure of the hysteresis model in order to design the stabilizing controller and to facilitate the analysis of the closed-loop systems. In this case, some well-studied systems' properties of hysteresis operators are dissipativity and passivity properties. 

In contrast to the aforementioned control problem where hysteresis is considered to be an undesirable nonlinear phenomenon, we study in this paper the control of the memory property of hysteresis operators. In particular, 
we are interested in the design of controller for regulating the output remnant value,
which is the leftover memory when the hysteresis input is set to zero, to a desired state.  
As hysteresis has a memory-effect that depends on the history of the applied input signal, the output remnant value can be driven from any given initial value to a desired one by a suitable input signal that is compactly defined (i.e., it has zero value outside a compact time interval). 
The set-point regulation of output remnant via a  compactly-defined input signal is relevant for applications that require minimal use of control input due to, for instance, input energy constraint or the associated energy loss/heat dissipation when a constant non-zero input is used to maintain the desired output. 

For high-precision mechatronic systems, a number of novel actuator systems have been proposed that exploit such output remnant behaviors. 
In \cite{Morita2007,Kadota2010}, 
a piezoelectric actuator with two stable configurations is developed.
A commercial piezoelectric actuator, so-called PIRest, is developed and presented in \cite{Reiser2018}. Recently, we have proposed and studied a hysteretic deformable mirror for space application that use a novel piezomaterial which allows us to achieve a large range of remnant deformation  \cite{Jayawardhana2018,Schmerbauch2020}. In the latter application, the use of set-point regulation via output remnant enables the development of a novel deformable mirror with high-density actuator systems via multiplexing with almost no heat dissipation \cite{Schmerbauch2020}.

In this paper, we propose a recursive algorithm to compute the desired compactly-defined input signal that solves the aforementioned set-point regulation problem using output remnant. We assume that the hysteresis is modeled by a classical Preisach operator and we use triangular signals as the basis for our compactly-defined input signal, similar to the one presented in \cite{Torres2018,Zhang2018}. Using our algorithm, we prove the asymptotic convergence of the signal to the desired one. Our results extend the work of \cite{Zhang2018} in two ways. Firstly, we show in Proposition 3.2 the existence of general sector bounds for the output remnant as a function of the amplitude of input signal without assuming sign-definiteness of the Preisach weighting function. Secondly, we show the monotonicity of the output remnant as a function of the amplitude within a compact interval such that the asymptotic convergence can be guaranteed in Proposition 4.2. Notably, the sign-indefiniteness of the Preisach's weighting function is relevant to the application of our algorithm to the output remnant control of piezoactuator systems that use piezomaterial exhibiting butterfly hysteresis loop as studied in  \cite{Jayawardhana2018}. 

\section{PRELIMINARIES}\label{sec:preliminaries}

We denote by $C(U,Y)$, $AC(U,Y)$, $C_{\text{pw}}(U,Y)$ the spaces of continuous, absolute continuous, and piece-wise continuous functions $f: U\to Y$, respectively.

\subsection{The Preisach hysteresis operator}
We introduce a formal definition of the classical Preisach operator following the exposition in \cite{Mayergoyz2003}. We define the so-called Preisach plane $P$ by $P:=\{(\alpha,\beta)\in\R^2\ |\ \alpha\geq\beta\}$, and correspondingly, we denote by $\mathcal{I}\subset P$ the set of all interfaces $L\in\mathcal{I}$, each of which is monotonically decreasing staircase line that can be described by a curve $\ell:\R_+\to P$ as follows $L=\{(\alpha,\beta)\ |\ (\alpha,\beta)=\ell(c),\, c\in\R_+ \}$ and such that $\ell(0)=(\beta_1,\beta_1)$ for some $\beta_1\in\R$, and $\lim_{c\to\infty} \Vert\ell(c)\Vert = \infty$. By monotonically decreasing we mean that $\alpha_1\geq\alpha_2$ whenever $\beta_1\geq\beta_2$ for all pairs $(\alpha_1,\beta_1),(\alpha_2,\beta_2)\in L$. Accordingly, the Preisach operator $\mathcal{P}:AC(\R_+,\R)\times\mathcal{I}\to AC(\R_+,\R)$ can be formally defined by
\begin{equation}\label{eq:operator_preisach}
    \Big( \mathcal{P} (u, L_0) \Big) (t) :=
    \iintdl_{(\alpha,\beta)\in P} 
    \mu(\alpha,\beta) 
    \left( \mathcal{R}_{\alpha,\beta} (u,L_0) \right) 
    (t)\ \dd\alpha \dd\beta
\end{equation}
where $\mu(\alpha,\beta)\in C(P,\R)$ is a weighting function, $L_0\in \mathcal{I}$ is the initial interface, and $\mathcal{R}_{\alpha,\beta}:AC(R_+,\R)\times\mathcal{I}\to C_{pw}(\R_+,\{-1,1\})$ is the relay operator defined by 
{\small\begin{equation}\label{eq:operator_relay}
    \Big(\mathcal R_{\alpha,\beta}(u,L_0)\Big)(t) :=
    \left\{ \begin{array}{ll}
        1 & \text{if } u(t)>\alpha, \\
        -1 & \text{if } u(t)<\beta, \\
        \big(\mathcal R_{\alpha,\beta}(u,r_0)\big)(t_-) 
            & \begin{aligned}
                \text{if } &\beta\leq u(t) \leq\alpha, \\ 
                &\quad\text{and } t>0,
                \end{aligned} \\
        r_{\alpha,\beta}(L_0) & \begin{aligned}
            \text{if } &\beta\leq u(t) \leq\alpha, \\
            &\quad\text{and } t=0.
            \end{aligned}
    \end{array}\right.
\end{equation}}
Note from the definition above that we have accommodated the initial interface $L_0$ through an auxiliary function $r_{\alpha,\beta}:\mathcal{I}\to\{-1,1\}$ which is defined by 
\begin{equation*}
    r_{\alpha,\beta}(L_0) := \left\{ \begin{array}{ll}
        1, & \text{if } L_0 \cap \{(\alpha_1,\beta_1)\,|\,\alpha<\alpha_1,\,\beta<\beta_1\} \neq\emptyset, \\
        -1 & \text{otherwise},
    \end{array}
    \right.
\end{equation*}
and whose purpose is to determine the initial state of the relay $\mathcal{R}_{\alpha,\beta}$ in accordance with the initial interface $L_0$. In other words, the function $r_{\alpha,\beta}$ will take value $+1$ if $(\alpha,\beta)$ is below the interface $L_0$, and $-1$ if $(\alpha,\beta)$ is above the interface $L_0$. It is important to note from \eqref{eq:operator_relay} that the value of $r_{\alpha,\beta}$ plays a role defining the initial state only for relays satisfying $\beta\leq u(0)\leq\alpha$.
Thus, to avoid inconsistencies between the value of $r_{\alpha,\beta}$ and the actual initial state some relays we assume always that $\left(u(0),u(0)\right)\in L_0$.

\subsection{The remnant control problem}

To introduce our formulation of the {\em remnant} control problem for the Preisach operator, let us start considering an input $u$ defined on a time interval $[0,\tau]$ with $\tau>0$ such that $u(0)=u(\tau)=0$, and an initial interface $L_0\in\mathcal{I}$ satisfying $(0,0)\in L_0$. When such input is applied to a Preisach operator in the form $\mathcal{P}(u,L_0)$, the final output value $y(\tau)$ may be different from the initial output value $y(0)$ due to the switching of some relays in the Preisach domain $P$ which occurs as result of the variations of $u$ within the interval $[0,\tau]$.
Let $L_\tau\in\mathcal{I}$ be the final interface which describes the state of relays in the Preisach operator at time instance $t=\tau$. It is clear that $(0,0)\in L_\tau$ (because $(u(\tau),u(\tau))=(0,0)$). Consequently, when the input of the Preisach operator is restricted to satisfy $u(0)=u(\tau)=0$, the initial and final interfaces are contained in a subset of $\mathcal{I}$ defined by
\begin{equation*}
    \mathcal{I}_\gamma := \left\{ L\in\mathcal{I}\,|\, (0,0)\in L \right\}.
\end{equation*}
Note that the restriction $u(0)=u(\tau)=0$ also compels relays whose $(\alpha,\beta)$ are in certain subdomains of $P$ to have fixed initial and final states regardless the behavior of $u$ within the interval $[0,\tau]$.
Consider a subdomain of the Preisach plane defined by
\begin{equation*}
    P_\gamma:=\{(\alpha,\beta)\in P\ |\ \alpha\geq 0,\, \beta\leq 0\}.
\end{equation*}
We have that every interface in $\mathcal{I}_\gamma$ lies entirely in $P_\gamma$. Consequently, relays whose $(\alpha,\beta)$ are not in the subdomain $P_\gamma$ are restrained to the state $-1$ (resp. $+1$) at both time instances $t=0$ and $t=\tau$ if they have $\beta>0$ (resp. $\alpha<0$). In other words, the set of relays  $\mathcal{R}_{\alpha,\beta}$ which have different initial and final state due to the variation of the signal $u$ in $(0,\tau)$ belongs to  $P_\gamma$.

The {\em remnant} of the Preisach operator refers to the instantaneous value of the output $y(t)$ when the input value satisfies $u(t)=0$ for some $t$. 
Roughly speaking, our remnant control problem corresponds to designing  a feedforward control input $u$ whose values at initial and terminal time are zero and the corresponding output of the Preisach operator has the desired remnant value $\gamma_d\in\R$ at the terminal time.   
To solve this problem, we propose a recursive algorithm based on an input of the form
\begin{equation}\label{eq:u_gamma}
    u_\gamma(t) := \sum_{k=0}^{\infty} w_k v_k(t)
\end{equation}
where $k\in\Z_+$, $w_k\in\R$ and $v_k$ is defined by
{\small\begin{equation}\label{eq:v}
    v_k(t) := \left\{\begin{array}{ccc}
        \frac{2}{\tau}(t-k\tau)  & \text{if }  k\tau \leq t \leq \left(k+\frac{1}{2}\right)\tau, \\[1mm]
        \frac{2}{\tau}\left(-t+(k+1)\tau\right) & \text{if } \left(k+\frac{1}{2}\right)\tau< t \leq\left(k+1\right)\tau, \\[1mm]
        0 & \text{otherwise},
    \end{array}\right.
\end{equation}}
with $\tau>0$. The function $v_k$ corresponds to a triangular pulse of unit amplitude and time length $\tau$, which starts at $t=k\tau$ and finishes at $t=(k+1)\tau$ and whose peak value occurs at $t=\left(k+\frac{1}{2}\right)\tau$. Therefore, the input $u_\gamma$ is a train of triangular pulses whose amplitudes are modulated by the factors $w_k$. 

Assume that $u_\gamma$ is applied as input to the Preisach operator and let $I_k\in\mathcal{I}_\gamma$ be the interface that describes the state of the relays at time instance $t=k\tau$ (i.e. $I_k = L_{(k\tau)}$). We can compute the remnant by a function $\gamma:\R\times\mathcal{I}_\gamma\to\R$ defined by
\begin{equation}\label{eq:gamma}
    \begin{aligned}
    \gamma(w_k,I_k) 
        &:= \Big(\mathcal{P} ( u_\gamma, I_0 ) \Big) \left((k+1)\tau\right) \\
        &\phantom{:}= \Big(\mathcal{P} ( w_k v_k, I_k) \Big) \left((k+1)\tau\right) = \Big(\mathcal{P} ( w_k v_0, I_k) \Big) \left(\tau\right)
    \end{aligned}
\end{equation}
In other words, the function $\gamma$ gives the remnant after the application of the $k$-th triangular pulse of $u_\gamma$ to a Preisach operator whose relays have initial states described by the interface $I_0$, or equivalently, the remnant after the application of a single triangular pulse with amplitude $w_k$ to a Preisach operator whose relays have initial states described by the interface $I_k$. In this way, we formulate the remnant control problem as finding the sequence of values $w_k$ that yields $\gamma(w_k,I_k)\to\gamma_d$ as $k\to\infty$.

\section{THE PROPERTIES OF THE REMNANT RATE}\label{sec:remnant_diff}

We analyze in this section the behavior of the remnant when the triangular pulses of the input $u_\gamma$ defined in \eqref{eq:u_gamma} is applied to the Preisach operator. For this, we consider the difference of remnant between two consecutive triangular pulses of $u_\gamma$, which is defined by
\begin{equation}\label{eq:Delta_k_gamma}
    \Delta_k \gamma := \gamma(w_{k+1},I_{k+1}) - \gamma(w_k,I_k).
\end{equation}
Let us introduce the auxiliary functions
{\small\begin{equation*}
    \begin{aligned}
        \ell_\beta^M(\alpha,L) &:= \max\left\{\beta\ |\ (\alpha,\beta)\in L\right\}, \\
        \ell_\beta^m(\alpha,L) &:= \min\left\{\beta\ |\ (\alpha,\beta)\in L\right\}, \\
        \ell_\alpha^M(\beta,L) &:= \max\left\{\alpha\ |\ (\alpha,\beta)\in L\right\}, \\
        \ell_\alpha^m(\beta,L) &:= \min\left\{\alpha\ |\ (\alpha,\beta)\in L\right\}, \\
    \end{aligned}
\end{equation*}}
which are used in following proposition to re-parameterize the coordinates $(\alpha,\beta)$ of the interface.

\begin{proposition}\label{prop:remnant_diff}
    Consider the remnant difference $\Delta_k \gamma$ defined in \eqref{eq:Delta_k_gamma}.
    For every $k\in\Z_+$, we have that
    {\small \begin{equation}\label{eq:Delta_k_gamma_explicit}
        \begin{aligned}
            \Delta_k \gamma = \left\{\begin{array}{cc}
                2\intdl_{M_{k+1}}^{w_{k+1}} \ 
                    \intdl_{\ell_\beta^M(\alpha,I_{k+1})}^{0} 
                    \mu(\alpha,\beta)\ \dd\beta\dd\alpha, & \text{if } w_{k+1}>M_{k+1},\\
                -2\intdl_{w_{k+1}}^{m_{k+1}} \ 
                    \intdl_{0}^{\ell_\alpha^m(\beta,I_{k+1})}
                    \mu(\alpha,\beta)\ \dd\alpha\dd\beta, & \text{if } w_{k+1}<m_{k+1},\\
                0, & \text{otherwise},
            \end{array}\right.
        \end{aligned}
    \end{equation}}
    with
    \begin{equation*}
        \begin{aligned}
            M_{k+1} = \ell_\alpha^M(0,I_{k+1})
            \quad\text{and}\quad
            m_{k+1} = \ell_\beta^m(0,I_{k+1}).
        \end{aligned}
    \end{equation*}
\end{proposition}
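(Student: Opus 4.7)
The plan is to compute $\Delta_k\gamma$ by identifying precisely which relays $\mathcal{R}_{\alpha,\beta}$ change their final state net over the $(k+1)$-th triangular pulse, and then integrating $\mu$ against this change. Using \eqref{eq:operator_preisach} together with \eqref{eq:gamma}, I would first write
\[
    \Delta_k\gamma \;=\; \iintdl_{P}\mu(\alpha,\beta)\bigl(R^{I_{k+2}}_{\alpha,\beta}-R^{I_{k+1}}_{\alpha,\beta}\bigr)\,\dd\alpha\,\dd\beta,
\]
where $R^{I}_{\alpha,\beta}\in\{-1,+1\}$ denotes the state the interface $I$ assigns to the relay at $(\alpha,\beta)$. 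Since the integrand equals $\pm 2$ only on those relays whose state actually changes between the end of the $k$-th and $(k+1)$-th pulse and vanishes elsewhere, the task reduces to computing the symmetric difference of the ``$+1$-shadows'' of $I_{k+1}$ and $I_{k+2}$.

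For the main case $w_{k+1}>M_{k+1}$, I would apply the two elementary wipe-out rules from \eqref{eq:operator_relay} to the halves of $w_{k+1}v_{k+1}$: the ascending half drives $u$ from $0$ to $w_{k+1}$ and hence forces every relay with $\alpha\le w_{k+1}$ into state $+1$, while the descending half drives $u$ back to $0$ and returns every relay with $0<\beta\le w_{k+1}$ to state $-1$. Combined with the structure of $I_{k+1}\in\mathcal{I}_\gamma$ recalled in Section \ref{sec:preliminaries} (every relay with $\beta>0$ is already $-1$, and every relay with $\alpha\le M_{k+1}$ and $\beta<0$ is already $+1$), these two rules imply that $I_{k+2}$ coincides with $I_{k+1}$ on $\{\alpha>w_{k+1}\}$ and simply extends the horizontal segment at $\beta=0$ from $(M_{k+1},0)$ out to $(w_{k+1},0)$. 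The shadow therefore grows only over the two-dimensional strip $\{(\alpha,\beta):\,M_{k+1}<\alpha\le w_{k+1},\,\ell_\beta^M(\alpha,I_{k+1})<\beta<0\}$, on which $R^{I_{k+2}}-R^{I_{k+1}}=+2$; substituting this and noting that $\beta=0$ is Lebesgue-negligible yields the first line of \eqref{eq:Delta_k_gamma_explicit}.

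The case $w_{k+1}<m_{k+1}$ is the mirror image of the above, with the roles of ascent/descent and of $(\alpha,\beta)$ exchanged, producing the second line of \eqref{eq:Delta_k_gamma_explicit} (with factor $-2$, since the switched relays now move from $+1$ to $-1$). In the middle case $m_{k+1}\le w_{k+1}\le M_{k+1}$, the input is confined to the range already spanned by the axis-aligned boundary segments of $I_{k+1}$; consequently, every relay toggled by one half of $w_{k+1}v_{k+1}$ is untoggled by the other half, with residue only on the one-dimensional set $\{\alpha=0\}\cup\{\beta=0\}$, and so $\Delta_k\gamma=0$. The main obstacle I anticipate is the careful description of $I_{k+2}$ when $I_{k+1}$ has a corner exactly at $\alpha=w_{k+1}$ (or $\beta=w_{k+1}$): the convention in \eqref{eq:operator_relay} for relays on the interface must be used consistently, but since this only affects a one-dimensional set and $\mu\in C(P,\R)$ is bounded on bounded subsets of $P$, such ambiguities do not influence the two-dimensional integral \eqref{eq:Delta_k_gamma_explicit}.
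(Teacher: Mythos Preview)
Your proposal is correct and follows essentially the same approach as the paper. Both arguments express the remnant difference as the integral of $\mu$ against the change in relay states, identify the region $\Omega_{w_{k+1}}$ of relays that actually switch under the $(k+1)$-th pulse (using the ascending/descending halves of $w_{k+1}v_{k+1}$), and read off the stated iterated integrals; your write-up is slightly more explicit about the wipe-out mechanics and the measure-zero boundary conventions, but the underlying idea and structure are the same.
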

\begin{figure}
    \centering
    \subfigure[when $w_{k+1}>M_{k+1}$]{\includegraphics[width=0.40\linewidth]{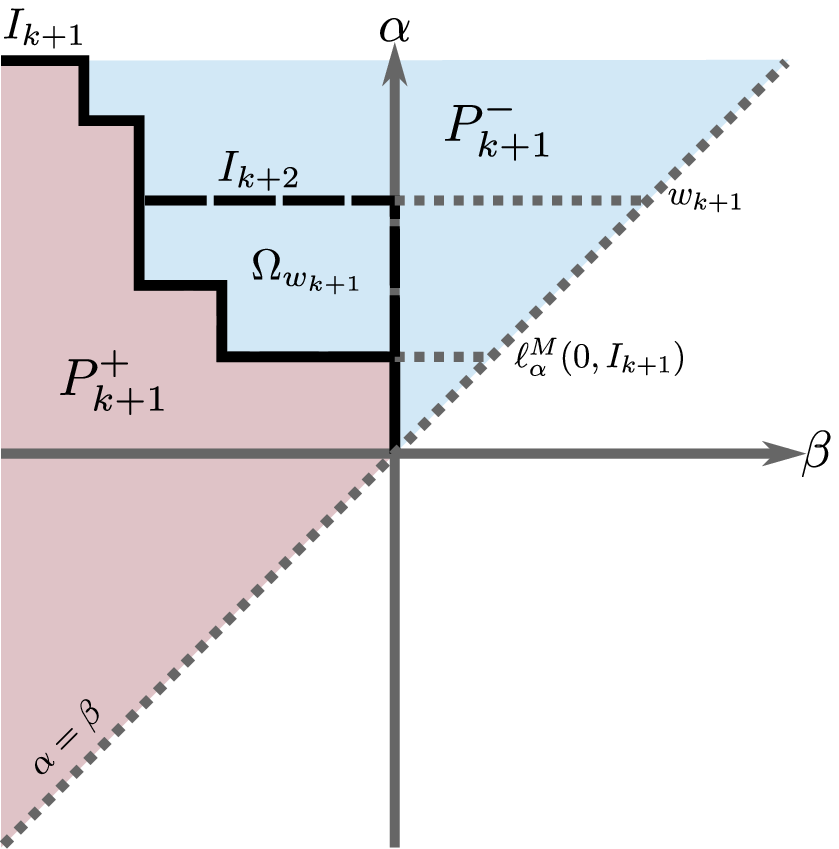}\label{sfig:pplane_w_k+1>M_k+1}}
    \hspace{5mm}
    \subfigure[when $w_{k+1}<m_{k+1}$]{\includegraphics[width=0.40\linewidth]{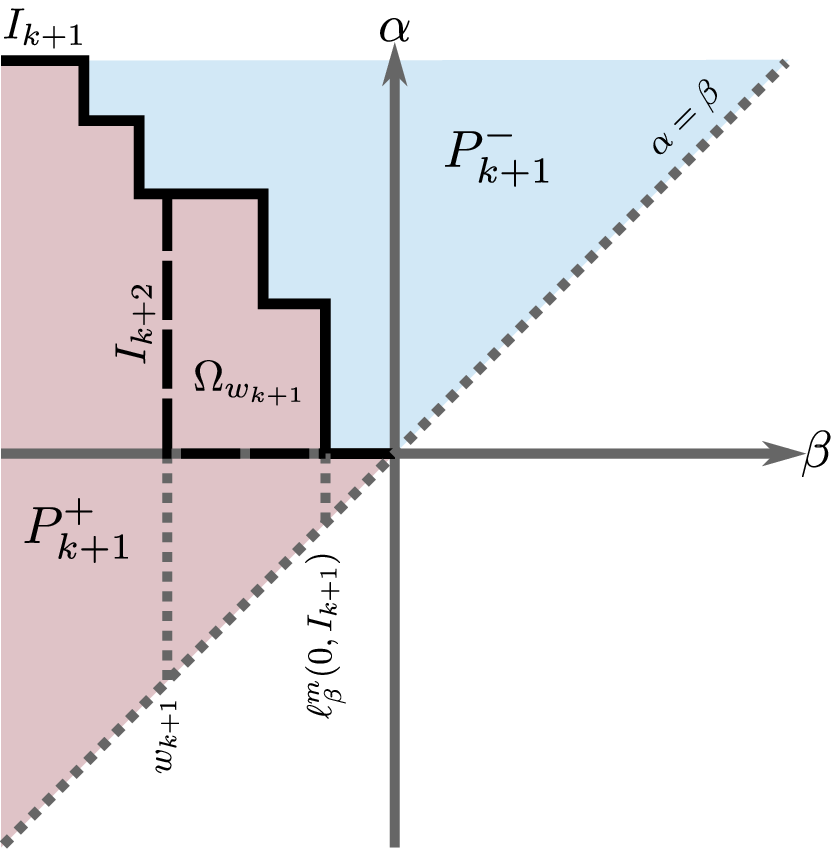}\label{sfig:pplane_w_k+1<m_k+1}}
    \caption{Partition of the Preisach plane $P$ used in Proposition \ref{prop:remnant_diff} to compute $\Delta_k \gamma := \gamma(w_{k+1},I_{k+1})-\gamma(w_k,I_k)$.\label{fig:pplane_w_k+1}}
\end{figure}%
\begin{proof}
    Consider the case when $w_{k+1}>M_{k+1}$ and let $P_{k+1}^+$ and $P_{k+1}^-$ be the subdomains of the Preisach domain $P$ that are below and above the interface $I_{k+1}$, respectively (see Fig. \ref{sfig:pplane_w_k+1>M_k+1}). Using these domains, the remnant of the Preisach operator at time instance $t=(k+1)\tau$ can be expressed by
    \begin{equation*}
        \begin{aligned}
            \gamma(w_k,I_k) 
            &= \iintdl_{P_{k+1}^+} \mu(\alpha,\beta)\ \dd\alpha\dd\beta - \iintdl_{P_{k+1}^-} \mu(\alpha,\beta)\ \dd\alpha\dd\beta. \\            
        \end{aligned}
    \end{equation*}
    Note that the value $M_{k+1}=\ell_\alpha^M(0,I_{k+1})$ is the $\alpha$-coordinate of the vertex in the interface $I_{k+1}$ which corresponds to last maximum of the input applied to the Preisach operator at time instance $t=(k+1)\tau$ (i.e. the last maximum of the truncated input $\{u_\gamma(t)\,|\,0\leq t\leq (k+1)\tau \}$). Therefore, since $w_{k+1}>M_{k+1}$, at the time instance $t=(k+2)\tau$ when the $(k+1)$-th triangular pulse finishes, there is a region $\Omega_{w_{k+1}}\subset P_{k+1}^-$ of relays whose states have switched from $-1$ to $+1$. This region is given by
    \begin{equation*}
        \Omega_{w_{k+1}} = \{(\alpha,\beta)\,|\,M_{k+1}\leq\alpha\leq w_{k+1},\ \ell_\beta^M(\alpha,I_{k+1})\leq\beta\leq 0\}.
    \end{equation*} 
    Consequently, it can be check that the remnant of the Preisach operator at time instance $t=(k+2)\tau$ is given by
    {\small\begin{equation*}
        \begin{aligned}
            \gamma(w_{k+1},I_{k+1}) 
            &=\phantom{+} \iintdl_{P_{k+1}^+} \mu(\alpha,\beta)\ \dd\alpha\dd\beta
            -\iintdl_{P_{k+1}^-}  \mu(\alpha,\beta)\ \dd\alpha\dd\beta \\
            &\phantom{=} + 2\iintdl_{\Omega_{w_{k+1}}} \mu(\alpha,\beta)\ \dd\alpha\dd\beta,
        \end{aligned}
    \end{equation*}}
    and subtracting both values of the remnant we have  
    \begin{equation*}
            \gamma(w_{k+1},I_{k+1})-\gamma(w_k,I_k) = 2\iintdl_{\Omega_{w_{k+1}}} \mu(\alpha,\beta)\ \dd\alpha\dd\beta,
    \end{equation*}
    and from the definition of the region $\Omega_{w_{k+1}}$, the integral limits can be parameterized as follows
    {\small\begin{equation*}
        \begin{aligned}
            \Delta_k \gamma = 2\intdl_{M_{k+1}}^{w_{k+1}} \ 
                \intdl_{\ell_\beta^M(\alpha,I_{k+1})}^{0} 
                \mu(\alpha,\beta)\ \dd\beta\dd\alpha .
        \end{aligned}
    \end{equation*}}
    Consider now the case when $w_{k+1}<m_{k+1}$ and again let $P_{k+1}^+$ and $P_{k+1}^-$ be the subdomains of the Preisach domain $P$ that are below and above the interface $I_{k+1}$, respectively (see Fig. \ref{sfig:pplane_w_k+1<m_k+1}). As in the previous case, the remnant of the Preisach operator at time instance $t=(k+1)\tau$ is given by
    \begin{equation*}
        \begin{aligned}
            \gamma(w_k,I_k) 
            &= \iintdl_{P_{k+1}^+} \mu(\alpha,\beta)\ \dd\alpha\dd\beta - \iintdl_{P_{k+1}^-} \mu(\alpha,\beta)\ \dd\alpha\dd\beta. \\            
        \end{aligned}
    \end{equation*}
    Observe that in this case the value $m_{k+1}=\ell_\beta^m(0,I_{k+1})$ is the $\beta$-coordinate of the vertex in the interface $I_{k+1}$ which corresponds to the last minimum of the input applied to the Preisach operator at time instance $t=(k+1)\tau$ (i.e. the last minimum of the truncated input $\{u_\gamma(t)\,|\,0\leq t\leq (k+1)\tau \}$). Since in this case $w_{k+1}<m_{k+1}$, at the time instance $t=(k+2)\tau$ when the $(k+1)$-th triangular pulse finishes, the region $\Omega_{w_{k+1}}\subset P_{k+1}^+$ of relays whose states have switched from $+1$ to $-1$ is given by
    \begin{equation*}
        \Omega_{w_{k+1}} = \{(\alpha,\beta)\,|\,0\leq\alpha\leq\ell_\alpha^m(\beta,I_{k+1}),\,w_{k+1}\leq\beta\leq m_{k+1}\},
    \end{equation*} 
    and the remnant of the Preisach operator at time instance $t=(k+2)\tau$ is given by
    {\small\begin{equation*}
        \begin{aligned}
            \gamma(w_{k+1},I_{k+1}) 
            &=\phantom{+} \iintdl_{P_{k+1}^+} \mu(\alpha,\beta)\ \dd\alpha\dd\beta
            -\iintdl_{P_{k+1}^-}  \mu(\alpha,\beta)\ \dd\alpha\dd\beta \\
            &\phantom{=} - 2\iintdl_{\Omega_{w_{k+1}}} \mu(\alpha,\beta)\ \dd\alpha\dd\beta .
        \end{aligned}
    \end{equation*}}
    Therefore, subtracting again both values of the remnant we have  
    \begin{equation*}
        \begin{aligned}
            \gamma(w_{k+1},I_{k+1})-&\gamma(w_k,I_k) = -2\iintdl_{\Omega_{w_{k+1}}} \mu(\alpha,\beta)\ \dd\alpha\dd\beta,
        \end{aligned}
    \end{equation*}
    and parameterizing the limits of the integral over the region $\Omega_{w_{k+1}}$ we have
    {\small\begin{equation*}
        \begin{aligned}
            \Delta_k \gamma = -2\intdl_{w_{k+1}}^{m_{k+1}} \ 
                \intdl_{0}^{\ell_\alpha^m(\beta,I_{k+1})}
                \mu(\alpha,\beta)\ \dd\alpha\dd\beta. \\
        \end{aligned}
    \end{equation*}}
    
    Finally, when $0\leq w_{k+1}<M_{k+1}$ or $m_{k+1}<w_{k+1}\leq 0$, then $I_{k+1}=I_{k+2}$ and at both time instances $t=(k+1)\tau$ and $t=(k+2)\tau$ all relays in the Preisach domain $P$ are in the same state which immediately implies $\gamma(w_{k+1},I_{k+1})-\gamma(w_k,I_k)=0$.
\end{proof}

Based on the explicit expression of $\Delta_k \gamma$ given by \eqref{eq:Delta_k_gamma_explicit} in Proposition \ref{prop:remnant_diff} and assuming that $\mu$ is compactly supported in a subset $P_\mu\subset P$, we can find sector bounds for $\Delta_k \gamma$ as a function of $\Delta_k w=w_{k+1}-w_k$. In other words, we find that the rate of the remnant difference respect to the difference between two consecutive amplitudes $w_{k+1}$ and $w_k$ is bounded.

\begin{proposition}\label{prop:remnant_diff_bounds}
    Let $\mu$ have a compact support $P_\mu\subset P$ whose intersection with $P_\gamma$ is not empty (i.e. $P_\mu\cap P_\gamma\neq\emptyset$), and consider $\Delta_k \gamma$ as given by \eqref{eq:Delta_k_gamma_explicit}. Then there exist constants $\Gamma_{1_+}\leq\Gamma_{2_+}$ and $\Gamma_{1_-}\leq\Gamma_{2_-}$ such that
    \begin{equation*}
        \begin{array}{cc}
            \Gamma_{1_+} \Delta_k w \leq\Delta_k \gamma \leq\Gamma_{2_+} \Delta_k w, & \text{if } \Delta_k w>0, \\[1mm]
            \Gamma_{1_-} \Delta_k w \leq\Delta_k \gamma  \leq\Gamma_{2_-} \Delta_k w, & \text{if } \Delta_k w<0,
        \end{array}
    \end{equation*}
    with $\Delta_k w = w_{k+1}-w_k$.
\end{proposition}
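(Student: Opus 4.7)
\medskip
\noindent The plan is to combine the explicit expression of Proposition~\ref{prop:remnant_diff} with the compactness of $P_\mu$, reducing the problem to a uniform bound of a continuous integrand over a thin rectangle, and then to tie the width of that rectangle to $\Delta_k w$ via a simple interface-geometry argument. I would begin by setting $\mu_M:=\max_{(\alpha,\beta)\in P_\mu}|\mu(\alpha,\beta)|$ (finite by continuity of $\mu$ and compactness of $P_\mu$), and defining $B:=\sup\{|\beta|:(\alpha,\beta)\in P_\mu,\,\beta\leq 0\}$ and $A:=\sup\{\alpha:(\alpha,\beta)\in P_\mu,\,\alpha\geq 0\}$; the assumption $P_\mu\cap P_\gamma\neq\emptyset$ guarantees these constants are meaningful.

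The first substantive step is a reduction of the case analysis in \eqref{eq:Delta_k_gamma_explicit}. For a fixed sign of $\Delta_k w$ only one of the two nontrivial branches can arise. If $\Delta_k w>0$, then $w_{k+1}<m_{k+1}$ is impossible: when $w_k\geq 0$ we have $m_{k+1}\leq 0\leq w_k$ because $(0,0)\in I_{k+1}\in\mathcal{I}_\gamma$, and when $w_k<0$ the pulse $w_k v_k$ forces $m_{k+1}=w_k$; in either situation $w_{k+1}<m_{k+1}$ would contradict $\Delta_k w>0$. A symmetric argument excludes $w_{k+1}>M_{k+1}$ when $\Delta_k w<0$. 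The middle branch of \eqref{eq:Delta_k_gamma_explicit} contributes $\Delta_k\gamma=0$ and satisfies every sector bound vacuously.

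For the surviving case $\Delta_k w>0$ and $w_{k+1}>M_{k+1}$, the integrand in \eqref{eq:Delta_k_gamma_explicit} vanishes outside $P_\mu\subset[M_{k+1},w_{k+1}]\times[-B,0]$ and is bounded by $\mu_M$ in magnitude, so $|\Delta_k\gamma|\leq 2\mu_M B(w_{k+1}-M_{k+1})$. The key geometric step is to show $M_{k+1}\geq w_k$, which yields $w_{k+1}-M_{k+1}\leq \Delta_k w$ and hence the sector bound with $\Gamma_{2_+}=2\mu_M B$ and $\Gamma_{1_+}=-2\mu_M B$. This inequality follows from two observations about $I_{k+1}$: if $w_k\leq 0$ then $(0,0)\in I_{k+1}$ already gives $M_{k+1}\geq 0\geq w_k$; if $w_k>0$ the falling branch of $w_k v_k$ (from the peak $w_k$ back to $0$) imprints in $I_{k+1}$ a horizontal segment at $\beta=0$ reaching at least $(w_k,0)$, so $M_{k+1}\geq w_k$. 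The entirely symmetric treatment of $\Delta_k w<0$, based on the companion fact $m_{k+1}\leq w_k$ and the $\alpha$-extent $A$, produces $\Gamma_{1_-}$ and $\Gamma_{2_-}$.

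The main obstacle will be the geometric claim $M_{k+1}\geq w_k$ (and its mirror $m_{k+1}\leq w_k$): this is essentially the Preisach wiping-out property applied to the single triangular pulse $w_k v_k$ starting and ending at $0$, but proving it cleanly requires tracking the creation and preservation of the horizontal segment at $\beta=0$ during the falling branch of the pulse. Once this is established, the rest reduces to an elementary estimate on the continuous function $\mu$ over a compact region.
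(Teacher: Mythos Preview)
Your proposal is correct and follows essentially the same skeleton as the paper: invoke the explicit formula of Proposition~\ref{prop:remnant_diff}, bound the inner integral uniformly using compactness of $P_\mu$, and then pass from $(w_{k+1}-M_{k+1})$ to $\Delta_k w$ via the key interface inequality $w_k\leq M_{k+1}$ (respectively $w_k\geq m_{k+1}$). The paper also asserts this inequality, with the same one-line justification (last dominant extremum), so your more detailed treatment of it is if anything an improvement rather than a gap.

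The genuine difference is in how you manufacture the constants. The paper defines $\Gamma_{1_+},\Gamma_{2_+}$ as the minimum and maximum over $P_\mu\cap P_\gamma$ of the inner integral $2\int_{\beta_1}^{0}\mu(\alpha,\beta)\,\dd\beta$ (and symmetrically for $\Gamma_{1_-},\Gamma_{2_-}$), then observes $\Gamma_{1_+}\leq 0\leq\Gamma_{2_+}$ so that the replacement $(w_{k+1}-M_{k+1})\mapsto(w_{k+1}-w_k)$ preserves the inequalities. You instead take the crude envelope $\pm 2\mu_M B$. Both work for the bare existence statement, but the paper's choice is sharper and, crucially, those exact integral-min/max constants are what get specialised over $Q$ in Proposition~\ref{prop:remnant_diff_bounds_posivive_mu} to obtain $\Gamma_{2_+}^Q$ and $\Gamma_{1_-}^Q$ used in the algorithm; your constants would not feed directly into that argument.

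Two small clean-ups: the inclusion ``$P_\mu\subset[M_{k+1},w_{k+1}]\times[-B,0]$'' is written backwards --- you mean the effective integration region lies in that box once $\mu$ vanishes outside $P_\mu$; and in the case $w_k<0$ you only need $m_{k+1}\leq w_k$, not equality (equality can fail if an earlier, deeper minimum survives in $I_{k+1}$, but the inequality still holds since the pulse ending at $0$ leaves the vertex $(0,w_k)$ on the interface).
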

\begin{proof}
    Following analysis from Proposition \ref{prop:remnant_diff}, assume that $w_{k+1}>M_{k+1}=\ell_\alpha^M(0,I_{k+1})$. Then by taking the maximum and minimum of the inner integral in the first case of \eqref{eq:Delta_k_gamma_explicit}, we define
    {\small \begin{align}
        \Gamma_{1_+} &:= 2\min_{(\alpha,\beta_1)\in P_\mu\cap P_\gamma}\ \intdl_{\beta_1}^{0} \mu(\alpha,\beta)\ \dd\beta, \label{eq:Gamma_1_+} \\
        \Gamma_{2_+} &:= 2\max_{(\alpha,\beta_1)\in P_\mu\cap P_\gamma}\ \intdl_{\beta_1}^{0} \mu(\alpha,\beta)\ \dd\beta. \label{eq:Gamma_2_+}
    \end{align}}
    Note that since $\beta_1\leq 0$ for every $(\alpha,\beta_1)\in P_\gamma$, then either one of the values \eqref{eq:Gamma_2_+} or \eqref{eq:Gamma_1_+} is zero (i.e.  $\Gamma_{1_+}=0$ or $\Gamma_{2_+}=0$), or they have opposite signs (i.e.  $\Gamma_{1_+}<0<\Gamma_{2_+}$ ). Consequently, we find that
    {\small \begin{equation*}
        \begin{aligned}
            \Delta_k \gamma
            &\geq \intdl_{M_{k+1}}^{w_{k+1}} \Gamma_{1_+} \dd\alpha 
            = \Gamma_{1_+} (w_{k+1}-M_{k+1}), \\
            \Delta_k \gamma 
            &\leq \intdl_{M_{k+1}}^{w_{k+1}} \Gamma_{2_+} \dd\alpha
            = \Gamma_{2_+} (w_{k+1}-M_{k+1}).
        \end{aligned}
    \end{equation*}}
    Moreover, since $M_{k+1}=\ell_\alpha^M(0,I_{k+1})$ is the $\alpha$-coordinate of the vertex in the interface $I_{k+1}$ corresponding to the last maximum of the truncated input $\{u_\gamma(t)\ |\ 0\leq t\leq (k+1)\tau\}$, then we have that $w_k\leq M_{k+1}$, which leads us to
    \begin{equation*}
        \Gamma_{1_+} (w_{k+1}-w_k) \leq \Delta_k \gamma \leq \Gamma_{2_+} (w_{k+1}-w_k).
    \end{equation*}
    Analogously, for the case $w_{k+1}<m_{k+1}=\ell_\beta^m(0,I_{k+1})$, we take the maximum and minimum of the inner integral in the second case of \eqref{eq:Delta_k_gamma_explicit} and define
    {\small \begin{align}
        \Gamma_{1_-} &:= 2\max_{(\alpha_1,\beta)\in P_\mu\cap P_\gamma}\ \intdl_{0}^{\alpha_1} \mu(\alpha,\beta)\ \dd\alpha, \label{eq:Gamma_1_-}\\
        \Gamma_{2_-} &:= 2\min_{(\alpha_1,\beta)\in P_\mu\cap P_\gamma}\ \intdl_{0}^{\alpha_1} \mu(\alpha,\beta)\ \dd\alpha. \label{eq:Gamma_2_-}
    \end{align}}
    Similarly to the previous case, observe that since $\alpha_1\leq 0$ for every $(\alpha_1,\beta)\in P_\gamma$, then either one of the values \eqref{eq:Gamma_2_-} or \eqref{eq:Gamma_1_-} is zero (i.e.  $\Gamma_{1_-}=0$ or $\Gamma_{2_-}=0$), or they have opposite signs (i.e.  $\Gamma_{2_-}<0<\Gamma_{1_-}$). Therefore, in this case we have that
    {\small \begin{equation*}
        \begin{aligned}
            \Delta_k \gamma
            &\geq -\intdl_{w_{k+1}}^{m_{k+1}} \Gamma_{1_-} \dd\beta 
            = -\Gamma_{1_-} (m_{k+1}-w_{k+1}), \\
            \Delta_k \gamma 
            &\leq -\intdl_{w_{k+1}}^{m_{k+1}} \Gamma_{2_-} \dd\beta
            = -\Gamma_{2_-} (m_{k+1}-w_{k+1}). \\
        \end{aligned}
    \end{equation*}}
     Furthermore, in this case $m_{k+1}=\ell_\beta^m(0,I_{k+1})$ is the $\beta$-coordinate of the vertex in the interface $I_{k+1}$ corresponding to the last minimum of the truncated input $\{u_\gamma(t)\ |\ 0\leq t\leq (k+1)\tau\}$. Thus $w_k\geq m_{k+1}$ and we can obtain
    \begin{equation*}
        \Gamma_{1_-} (w_{k+1}-w_k) \leq \Delta_k \gamma \leq \Gamma_{2_-} (w_{k+1}-w_k).
    \end{equation*}
    Finally, when $m_{k+1}\leq w_{k+1}\leq M_{k+1}$ we have $\Delta_k \gamma = 0$ and both inequalities hold with the same values defined in \eqref{eq:Gamma_1_+}-\eqref{eq:Gamma_2_-}.
\end{proof}

Proposition \ref{prop:remnant_diff_bounds} proves the existence of general sector bounds for $\Delta_k \gamma$ as a function of $\Delta_k w$ disregarding the sign of $\mu$. In the next proposition we show that when $\mu$ is positive in a compact subset of $P_\gamma$, then under mild assumptions over the initial interface $I_0$ and the magnitude of every factor $w_k$, we have that $\Delta_k \gamma$ is monotonic respect to $\Delta_k w$.

\begin{proposition}\label{prop:remnant_diff_bounds_posivive_mu}
    Assume that there exists a non-empty subdomain $Q\subseteq P_\mu\cap P_\gamma$ of the form
    \begin{equation}\label{eq:Q}
        Q = \{(\alpha,\beta)\in P_\mu\cap P_\gamma\ |\ 0\leq\alpha\leq\alpha_2,\,\beta_2\leq\beta\leq 0 \},
    \end{equation}
    with $\alpha_2>0$ and $\beta_2<0$, such that $\mu(\alpha,\beta)\geq 0$ for every $(\alpha,\beta)\in Q$. Moreover, let the initial interface $I_0\in\mathcal{I}_\gamma$ be such that for every $(\alpha,\beta)\in I_0$ we have $\alpha\geq\alpha_2$ whenever $\beta\leq\beta_2$ and $\beta\leq\beta_2$ whenever $\alpha\geq\alpha_2$, and assume that $w_k\in[\beta_2,\alpha_2]$ for every $k\in\Z_+$. Then
    \begin{align}\label{eq:Delta_k_gamma_bound_positive_mu}
        0 & \leq \frac{\Delta_k\gamma}{\Delta_k w} \leq \max\left\{\Gamma_{2_+}^Q,\Gamma_{1_-}^Q\right\}, &\text{when } \Delta_k w \neq 0,
    \end{align}
    with
   {\small \begin{align}
        \Gamma_{2_+}^Q & = 2\max_{(\alpha,\beta_1)\in Q} \ \ \intdl_{\beta_1}^{0} \mu(\alpha,\beta)\ \dd\beta, \label{eq:Gamma_2_+^Q} \\ 
        \Gamma_{1_-}^Q & = 2\max_{(\alpha_1,\beta)\in Q} \ \ \intdl_{0}^{\alpha_1} \mu(\alpha,\beta)\ \dd\alpha. \label{eq:Gamma_1_-^Q}
    \end{align}}
\end{proposition}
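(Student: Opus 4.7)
The approach is to apply Proposition \ref{prop:remnant_diff} and show that, under the stated hypotheses, the switching region $\Omega_{w_{k+1}}$ appearing in its explicit formula is contained in $Q$. Since $\mu \geq 0$ on $Q$, this immediately yields the sign $\Delta_k\gamma/\Delta_k w \geq 0$, while the upper bound follows by redoing the majorization performed in the proof of Proposition \ref{prop:remnant_diff_bounds}, but with the maximum taken over $Q$ instead of all of $P_\mu \cap P_\gamma$.

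The first step is a structural lemma on the interfaces $I_k$: for every $k\in\Z_+$ and every $(\alpha,\beta)\in I_k$, the corner condition $\alpha\geq\alpha_2 \Leftrightarrow \beta\leq\beta_2$ is inherited from $I_0$. I would prove this by induction on $k$, with the base case being the hypothesis on $I_0$. For the inductive step, since $w_k\in[\beta_2,\alpha_2]$, the input $u_\gamma$ never exceeds $\alpha_2$ nor drops below $\beta_2$; hence any relay $\mathcal{R}_{\alpha,\beta}$ with $\alpha>\alpha_2$ cannot be switched to $+1$ and any relay with $\beta<\beta_2$ cannot be switched to $-1$. Combined with the inductive hypothesis on the state of the relays in these outer regions, the portion of $I_{k+1}$ with $\alpha>\alpha_2$ or $\beta<\beta_2$ coincides with that of $I_k$, so the corner condition is preserved.

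The second step is to apply Proposition \ref{prop:remnant_diff}. In the case $w_{k+1}>M_{k+1}$: since $(0,0)\in I_{k+1}\in\mathcal{I}_\gamma$ we have $M_{k+1}\geq 0$, and since $w_{k+1}\leq\alpha_2$, the $\alpha$-range of $\Omega_{w_{k+1}}$ is contained in $[0,\alpha_2]$; by the structural lemma (contrapositive: $\alpha<\alpha_2$ implies $\beta\geq\beta_2$ on $I_{k+1}$), we get $\ell_\beta^M(\alpha,I_{k+1})\geq\beta_2$ for each such $\alpha$, so $\Omega_{w_{k+1}}\subseteq Q$. Non-negativity of $\mu$ on $Q$ gives $\Delta_k\gamma\geq 0$, and bounding the inner integral by $\Gamma_{2_+}^Q/2$ yields $\Delta_k\gamma\leq \Gamma_{2_+}^Q(w_{k+1}-M_{k+1})\leq \Gamma_{2_+}^Q\,\Delta_k w$, where the last inequality uses $w_k\leq M_{k+1}$ as already established in Proposition \ref{prop:remnant_diff_bounds}. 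The case $w_{k+1}<m_{k+1}$ is entirely symmetric and produces the bound $\Gamma_{1_-}^Q$; the remaining case has $\Delta_k\gamma=0$ and the inequalities are trivial. Taking the maximum over the two non-trivial cases delivers the claimed bound.

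The main obstacle is the structural lemma. While intuitively evident from the fact that the input amplitude is clamped inside the rectangle determined by the corner $(\alpha_2,\beta_2)$, making it rigorous requires tracking how a single triangular pulse of amplitude in $[\beta_2,\alpha_2]$ reshapes the Preisach staircase and then invoking the monotonicity of the staircase to transfer the outer-corner property from $I_k$ to $I_{k+1}$. Once this geometric fact is in place, the remainder of the argument is a direct specialization of the bounding technique of Proposition \ref{prop:remnant_diff_bounds}.
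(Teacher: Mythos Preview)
Your proposal is correct and follows essentially the same route as the paper. The paper's proof asserts in one sentence that, under the hypotheses on $I_0$ and the amplitude constraint $w_k\in[\beta_2,\alpha_2]$, ``only the relays with $(\alpha,\beta)\in Q$ can be affected by the input $u_\gamma$'' and then restricts the max/min in \eqref{eq:Gamma_1_+}--\eqref{eq:Gamma_2_-} to $Q$, obtaining $\Gamma_{1_+}^Q=\Gamma_{2_-}^Q=0$ from $\mu\geq 0$; your structural lemma and the inclusion $\Omega_{w_{k+1}}\subseteq Q$ are simply a more explicit, inductive justification of that same assertion, after which your bounding step coincides with the paper's.
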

\begin{proof}
    Note from the assumptions of the initial interface $I_0$ that none of its points lies in the subdomains $\{(\alpha,\beta)\ |\ \alpha>\alpha_2,\,\beta_2<\beta\leq 0\}$ and $\{(\alpha,\beta)\ |\ \beta<\beta_2,\,0\leq\alpha<\alpha_2\}$. Moreover, since $w_k$ is restricted to the interval $[\beta_2,\alpha_2]$ for every $k\in\Z_+$, then only the relays with $(\alpha,\beta)\in Q$ can be affected by the input $u_\gamma$ defined in \eqref{eq:u_gamma}. 
    Therefore, to find the sector bounds of $\Delta_k \gamma$ as a function of $\Delta_k w$, it is enough to modify \eqref{eq:Gamma_1_+}-\eqref{eq:Gamma_2_-} to take the maximum and minimum over $Q$. Thus when $\mu(\alpha,\beta)\geq 0$, for every $(\alpha,\beta)\in Q$,  we have that 
    {\small \begin{align*}
        \Gamma_{1_+}^Q &= 2\min_{(\alpha,\beta_1)\in Q}\ \intdl_{\beta_1}^{0} \mu(\alpha,\beta)\ \dd\beta = 0, \\
        \Gamma_{2_-}^Q &= 2\min_{(\alpha_1,\beta)\in Q}\ \intdl_{0}^{\alpha_1} \mu(\alpha,\beta)\ \dd\alpha = 0,
    \end{align*}}
    and it follows that
    \begin{align*}
        0 & \leq \Delta_k\gamma \leq \Gamma_{2_+}^Q\Delta_k w, &\text{if } \Delta_k w>0, \\
        \Gamma_{1_-}^Q\Delta_k w & \leq \Delta_k\gamma \leq 0, &\text{if } \Delta_k w<0,
    \end{align*}
    which combined yield \eqref{eq:Delta_k_gamma_bound_positive_mu}.
\end{proof}

We remark from Proposition \ref{prop:remnant_diff_bounds_posivive_mu} that in case the initial interface $L_0$ of a Preisach operator is unknown or does not satisfy the stated assumptions, it is possible to apply a single triangular pulse with amplitude either $w=\beta_2$ or $w=\alpha_2$ and to consider the new obtained interface, which will satisfy the assumptions, as the initial interface. Furthermore, when $\mu$ is negative in the set $Q$, an inequality to prove the monotonicity of $\Delta_k \gamma$ respect to $\Delta_k w$ can be also obtained. However, in that case we would obtain values $\Gamma^Q_{2_-}\leq 0$ and $\Gamma^Q_{1_+}\leq 0$ such that $\min\left\{\Gamma_{2_-}^Q,\Gamma_{1_+}^Q\right\} \leq \frac{\Delta_k\gamma}{\Delta_k w} \leq 0$. 

\section{THE RECURSIVE ALGORITHM FOR THE REMNANT CONTROL}\label{sec:algorithm}

In this section we present the recursive control algorithm to compute $w_{k+1}$ as a function of $w_{k}$ and the error of the remnant after the $k$-th triangular pulse of $u_\gamma$. Our algorithm works for the case considered in Proposition \ref{prop:remnant_diff_bounds_posivive_mu} when there exists a compact subset $Q\subset P_\mu\cap P_\gamma$ where $\mu$ is positive. The algorithm can easily be adapted to the case when $\mu$ is negative in a compact subset of $Q\subset P_\mu\cap P_\gamma$. Before introducing the algorithm, we present the next lemma which provides a way to compute the maximum and minimum remnant that can be obtained from a Preisach operator whose weighting function and initial interface satisfy conditions of Proposition \ref{prop:remnant_diff_bounds_posivive_mu}.

\begin{lemma}\label{lemma:max_remnant}
    Let $Q\subseteq P_\mu\cap P_\gamma$ and $I_0\in\mathcal{I}_\gamma$ be a non-empty subdomain and initial interface, respectively, that satisfy conditions stated in Proposition \ref{prop:remnant_diff_bounds_posivive_mu}. Then the maximum and minimum values of $\gamma$ with the initial interface $I_0$ are given by
    \begin{align}
        \gamma_{\max} = \max_{w\in[\beta_2,\alpha_2]} \gamma(w,I_0) = \gamma(\alpha_2,I_0), \label{eq:gamma_max}\\
        \gamma_{\min} = \min_{w\in[\beta_2,\alpha_2]} \gamma(w,I_0) = \gamma(\beta_2,I_0) \label{eq:gamma_min},
    \end{align}
    where $\alpha_2$ and $\beta_2$ are the values used for the definition of $Q$ in \eqref{eq:Q}.
\end{lemma}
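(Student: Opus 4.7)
The plan is to prove the maximum claim; the minimum follows by an entirely symmetric argument. Fix $w \in [\beta_2, \alpha_2]$ and consider the two-pulse sequence $w_0 = w$, $w_1 = \alpha_2$, letting $I_1$ denote the interface that results from applying the first pulse to $I_0$. Since $I_0$ fulfills the structural hypothesis and every $w_k$ lies in $[\beta_2, \alpha_2]$, Proposition \ref{prop:remnant_diff_bounds_posivive_mu} applies and yields $\gamma(\alpha_2, I_1) - \gamma(w, I_0) \geq 0$ because $\alpha_2 - w \geq 0$. It then suffices to prove $\gamma(\alpha_2, I_1) = \gamma(\alpha_2, I_0)$ in order to conclude $\gamma(w, I_0) \leq \gamma(\alpha_2, I_0)$, which is the desired inequality.

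To establish this equality I would invoke the wipe-out property of the Preisach operator. A positive triangular pulse of amplitude $\alpha_2$ drives every relay in $P_\gamma \cap \{\alpha \leq \alpha_2\}$ to state $+1$ and leaves relays in $P_\gamma \cap \{\alpha > \alpha_2\}$ untouched, because $u(t)$ never exceeds $\alpha_2$ nor drops below $0$ during the pulse. Hence the interfaces obtained by applying $\alpha_2$ to $I_1$ and to $I_0$ automatically coincide inside $P_\gamma \cap \{\alpha \leq \alpha_2\}$, and they coincide on $P_\gamma \cap \{\alpha > \alpha_2\}$ provided the first pulse $w$ did not alter any relay state there. Outside $P_\gamma$ the relay states after any pulse with $u(0) = u(\tau) = 0$ are determined by the pulse geometry alone (relays with $\beta > 0$ end in $-1$ and relays with $\alpha < 0$ end in $+1$), independent of the initial interface, so there is nothing further to check in that region.

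It remains to verify that the $w$-pulse leaves $P_\gamma \cap \{\alpha > \alpha_2\}$ unchanged. For $w \in [0, \alpha_2]$ this is immediate, since a positive pulse of amplitude $w \leq \alpha_2$ only touches relays with $\alpha \leq w$. For $w \in [\beta_2, 0)$ a negative pulse would in principle switch any $+1$ relay with $\beta \in [w, 0]$ to $-1$; but the hypothesis on $I_0$ forces its portion in $\{\alpha \geq \alpha_2\}$ to lie at $\beta \leq \beta_2 \leq w$, so every relay in $\{\alpha > \alpha_2,\, w \leq \beta \leq 0\}$ already sits above $I_0$ and is in state $-1$, and no switching actually occurs. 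Consequently, the two post-$\alpha_2$ interfaces coincide and the equality of remnants follows.

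The main obstacle I anticipate is precisely this negative-$w$ case of the wipe-out argument: it is the one step that uses the full strength of the structural assumption imposed on $I_0$, and it must be checked carefully because a negative pulse does propagate into the column $\{\alpha > \alpha_2\}$ through its $\beta$-coordinate. The minimum claim is then dispatched by the symmetric sequence $w_0 = w$, $w_1 = \beta_2$, replacing the positive wipe-out by its negative counterpart and using the mirror part of the hypothesis (namely that $I_0$ in $\{\beta \leq \beta_2\}$ sits at $\alpha \geq \alpha_2$).
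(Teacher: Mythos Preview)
Your argument is correct, but it takes a considerably more elaborate route than the paper's own proof. The paper proceeds directly: it recalls from the proof of Proposition~\ref{prop:remnant_diff_bounds_posivive_mu} that, under the stated hypotheses on $I_0$ and $w_k\in[\beta_2,\alpha_2]$, the only relays whose state can change are those with $(\alpha,\beta)\in Q$. Since $\mu\geq 0$ on $Q$, the remnant is maximized precisely when every relay in $Q$ is in state $+1$ and minimized when every relay in $Q$ is in state $-1$; a single pulse of amplitude $\alpha_2$ (resp.\ $\beta_2$) achieves exactly that configuration. That is the whole proof.

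Your approach instead fixes an arbitrary $w$, builds the two-pulse sequence $w_0=w$, $w_1=\alpha_2$, invokes the monotonicity conclusion of Proposition~\ref{prop:remnant_diff_bounds_posivive_mu} to get $\gamma(\alpha_2,I_1)\geq\gamma(w,I_0)$, and then proves the wipe-out identity $\gamma(\alpha_2,I_1)=\gamma(\alpha_2,I_0)$ by a case analysis on the sign of $w$ together with the structural hypothesis on $I_0$. This is sound, and the delicate negative-$w$ step you flag is handled correctly. What it buys you is a proof that relies only on the \emph{statement} of Proposition~\ref{prop:remnant_diff_bounds_posivive_mu} rather than on the intermediate observation from its proof that the affected region is exactly $Q$; the cost is that you end up essentially re-deriving that localization fact in your wipe-out argument. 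The paper's approach is shorter because it reuses that localization directly.
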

\begin{proof}
    Note that since only relays with $(\alpha,\beta)\in Q$ can be affected by the input $u_\gamma$ when $w\in[\beta_2,\alpha_1]$, and $\mu$ is positive in $Q$, then the maximum (resp. minimum) remnant possible is obtained when all relays in $Q$ are in $+1$ state (resp. $-1$ state). It follows that after the application of a triangular pulse with amplitude $w=\alpha_2$ (resp. $w=\beta_2$), all relays in $Q$ are in $+1$ state (resp. $-1$ state).
\end{proof}
\begin{proposition}\label{prop:algorithm}
    Let $Q\subseteq P_\mu\cap P_\gamma$ and $I_0$ be a non-empty subdomain and initial interface, respectively, that satisfy conditions stated in Proposition \ref{prop:remnant_diff_bounds_posivive_mu}, and assume that $w_0 \in [\beta_2,\alpha_2]$ and $\gamma_d\in[\gamma_{\min},\gamma_{\max}]$. Consider the following update rule for the amplitude of the triangular pulse 
    \begin{equation}\label{eq:algorithm_w_k+1}
        w_{k+1} = w_k - \lambda e_k,
    \end{equation}
    where  $e_k=\gamma(w_k,I_k)-\gamma_d$ and $\lambda>0$ is the adaptation gain. If $\lambda$ satisfies
    {\small\begin{equation}\label{eq:algorithm_lambda_condition}
        0 < \lambda < \frac{2}{ \max\left\{\Gamma_{2_+}^Q,\Gamma_{1_-}^Q\right\} },
    \end{equation}}
    then $e_k\to0$ as $k\to\infty$.
\end{proposition}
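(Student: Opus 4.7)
The plan is to combine the sector bound from Proposition 3.3 with the update rule $w_{k+1}=w_k-\lambda e_k$ to obtain a scalar recursion for the error, and then use the step-size condition to drive the error to zero. Writing $\Delta_k w := w_{k+1}-w_k = -\lambda e_k$, Proposition 3.3 provides some $\eta_k \in [0,\Gamma]$ with $\Gamma := \max\{\Gamma_{2_+}^Q,\Gamma_{1_-}^Q\}$ such that $\Delta_k\gamma = \eta_k \Delta_k w$. Since $e_{k+1}=e_k+\Delta_k\gamma$, this yields the one-step recursion
\begin{equation*}
e_{k+1} \,=\, (1-\lambda\eta_k)\, e_k .
\end{equation*}
The condition $0<\lambda<2/\Gamma$ forces $\lambda\eta_k\in[0,2)$, so $|1-\lambda\eta_k|\leq 1$, and $\{|e_k|\}$ is non-increasing; it therefore converges to some $e_\infty\geq 0$.

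A preliminary step is to verify inductively that the hypotheses of Proposition 3.3 persist along the iteration, namely that $w_k\in[\beta_2,\alpha_2]$ and that $I_k$ continues to satisfy the admissibility condition imposed on $I_0$. The admissibility of $I_k$ follows from the wiping-out property of the Preisach operator together with the restriction $w_k\in[\beta_2,\alpha_2]$, which prevents new vertices of the interface from appearing in the forbidden regions $\{\alpha>\alpha_2,\ \beta_2<\beta\leq 0\}$ and $\{\beta<\beta_2,\ 0\leq\alpha<\alpha_2\}$. To control $w_k$ itself I would apply Proposition 3.3 to a virtual pair of consecutive pulses of amplitudes $w_k$ and $\beta_2$ (respectively $\alpha_2$), together with Lemma 4.1, to obtain the one-sided Lipschitz bounds $|e_k|\leq\Gamma(w_k-\beta_2)$ when $e_k>0$ and $|e_k|\leq\Gamma(\alpha_2-w_k)$ when $e_k<0$, which, combined with $\lambda\Gamma<2$, control the step.

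The main obstacle is the degenerate case $\eta_k=0$: by Proposition 3.1 this occurs whenever $w_{k+1}\in[m_{k+1},M_{k+1}]$ and no relay switches, giving merely $|e_{k+1}|=|e_k|$ and stalling the recursion. To rule out $e_\infty>0$ I would argue by contradiction. If $e_\infty>0$, then along a sign-definite tail the drift $-\lambda e_k$ pushes $w_k$ monotonically by at least $\lambda e_\infty/2$ per step within the bounded interval $[\beta_2,\alpha_2]$, so after finitely many steps $w_k$ must cross the current extreme $m_k$ or $M_k$, triggering $\eta_k>0$ with a uniform positive lower bound depending only on $\mu$ and $e_\infty$. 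Iterating this observation yields an accumulated strict decrease of $|e_k|$ that contradicts $|e_k|\to e_\infty>0$, and therefore $e_\infty=0$, which is the desired conclusion.
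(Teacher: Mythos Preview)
Your core argument coincides with the paper's: write $e_{k+1}=e_k+\Delta_k\gamma$, substitute $\Delta_k w=-\lambda e_k$, and use the sector bound of Proposition~\ref{prop:remnant_diff_bounds_posivive_mu} to obtain $e_{k+1}=(1-\lambda\eta_k)e_k$ with $\eta_k\in[0,\Gamma]$. The paper stops here, simply calling this a contraction under \eqref{eq:algorithm_lambda_condition} and concluding.

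You go further than the paper on two points that the paper itself leaves implicit: (i) the forward invariance $w_k\in[\beta_2,\alpha_2]$, and (ii) the degenerate case $\eta_k=0$, which prevents the map from being a strict contraction. Noticing these is to your credit; the paper's proof does not treat them. However, your treatment of (i) is not quite right. Your one-sided estimate $e_k\leq\Gamma(w_k-\beta_2)$ (obtained via a virtual pulse to $\beta_2$ and Lemma~\ref{lemma:max_remnant}) only yields $w_{k+1}\geq(1-\lambda\Gamma)w_k+\lambda\Gamma\beta_2$, which guarantees $w_{k+1}\geq\beta_2$ when $\lambda\Gamma\leq 1$ but not in the full range $\lambda\Gamma\in(1,2)$ permitted by \eqref{eq:algorithm_lambda_condition}. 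So the phrase ``combined with $\lambda\Gamma<2$, control the step'' is too optimistic as stated. For (ii), your drift-and-trigger argument is plausible, but the claimed ``uniform positive lower bound'' on $\eta_k$ after crossing $m_{k+1}$ or $M_{k+1}$ requires more than $\mu\geq 0$ on $Q$; with mere nonnegativity the inner integrals in \eqref{eq:Delta_k_gamma_explicit} can still vanish on subregions, so a quantitative lower bound needs an additional hypothesis or a more careful use of the structure of $\mu$.

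In short: your main line matches the paper exactly, and the extra care you take exposes genuine gaps in the paper's own argument; just be aware that your patches for those gaps need tightening, particularly the invariance claim for $\lambda\Gamma>1$.
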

\begin{proof}
    The remnant error after the application of the $(k+1)$-th triangular pulse is given by
    \begin{equation*}
        \begin{aligned}
            e_{k+1} 
            &= \gamma(w_{k+1},I_{k+1}) - \gamma_d \\
            &= \gamma(w_k,I_k) - \gamma_d + \gamma(w_{k+1},I_{k+1}) - \gamma(w_k,I_k)\\
            &= e_k + \Delta_k\gamma,
        \end{aligned}
    \end{equation*}
    where $\Delta_k \gamma$ is explicitly given by \eqref{eq:Delta_k_gamma_explicit} in Proposition \ref{prop:remnant_diff}. Introducing $\Delta_k w=-\lambda e_k$, we obtain
    {\small\begin{equation*}
        \begin{aligned}
            e_{k+1}
            = \left( e_k + \frac{\Delta_k\gamma}{\Delta_k w}\Delta_k w \right) 
            = \left( 1 - \lambda \frac{\Delta_k\gamma}{\Delta_k w} \right) e_k
        \end{aligned}
    \end{equation*}}
    which by Proposition \ref{prop:remnant_diff_bounds_posivive_mu} is a contraction mapping if $\lambda$ is chosen to satisfy \eqref{eq:algorithm_lambda_condition}.
\end{proof}

\section{SIMULATION}\label{sec:simulation}

To illustrate the application of the algorithm introduced in Proposition \ref{prop:algorithm}, we performed a simulation controlling the remnant of a particular class of Preisach operator known as the Preisach butterfly operator. The main characteristic of this class of Preisach operator is that its weighting function has disjoint subdomains of positive and negative values with a particular distribution and we refer interested readers to \cite{Jayawardhana2018} for the details. In this work, we used real data of the relation between electric-field and strain of a piezoelectric material 
sample made of doped Lead Zirconate Tinate (PZT) that exhibits the butterfly hysteresis loop on the left of Fig. \ref{fig:simulation_butterfly}.
The measurements were taken by laser interferometer applying triangular periodic inputs of 1400V of amplitude at constant low frequency of 1Hz, which is significantly lower than the resonant frequency of the system for obtaining the rate-independent hysteresis measurement as in \cite{Rakotondrabe2010_2}, and we fitted a weighting function to obtain the Preisach butterfly operator. 
For the obtained weighting function, the subdomain $Q$ was approximated by 
$Q = \{ (\alpha,\beta)\in P\ | \ -850\leq\beta\leq 0,\, 0\leq\alpha\leq 1400\},$
which is indicated by a dashed line enclosing a region of the weighting function illustrated in Fig. \ref{fig:simulation_butterfly}. We found for this $Q$ that $\Gamma_{2_+}\approx 6.83$, $\Gamma_{1_-}\approx 5.50$, $\gamma_{\max}\approx 433.83$, and $\gamma_{\min}=\approx -141.96$, and the initial interface considered was $I_0=\{(\alpha,\beta)\in P\ |\ \alpha=1400,\,-\infty<\beta\leq-800\} \cup \{(\alpha,\beta)\in P\ |\ 0\leq\alpha\leq 1400,\, \beta=-800\}$. 
For simulation purpose, we took $\lambda = 0.28$ and $\gamma_d=250$ 
and used an input $u_\gamma$ whose triangular pulses length was $\tau=1$. We truncated it to zero after $20$ steps (i.e. $u(t)=0$ for $t\geq 20$) once the output remnant $\gamma(w_k,I_k)$ was sufficiently close to $\gamma_d$. 
It can be observed in the simulation results of Fig. \ref{fig:simulation_results} that the output value $y(t)\approx\gamma_d$ is maintained for $t\geq 20$ when the input $u_\gamma$ has been removed.
\begin{figure}
    \centering
    \subfigure{\includegraphics[width=0.22\textwidth,trim={0 0.75cm 0 0.75cm},clip]{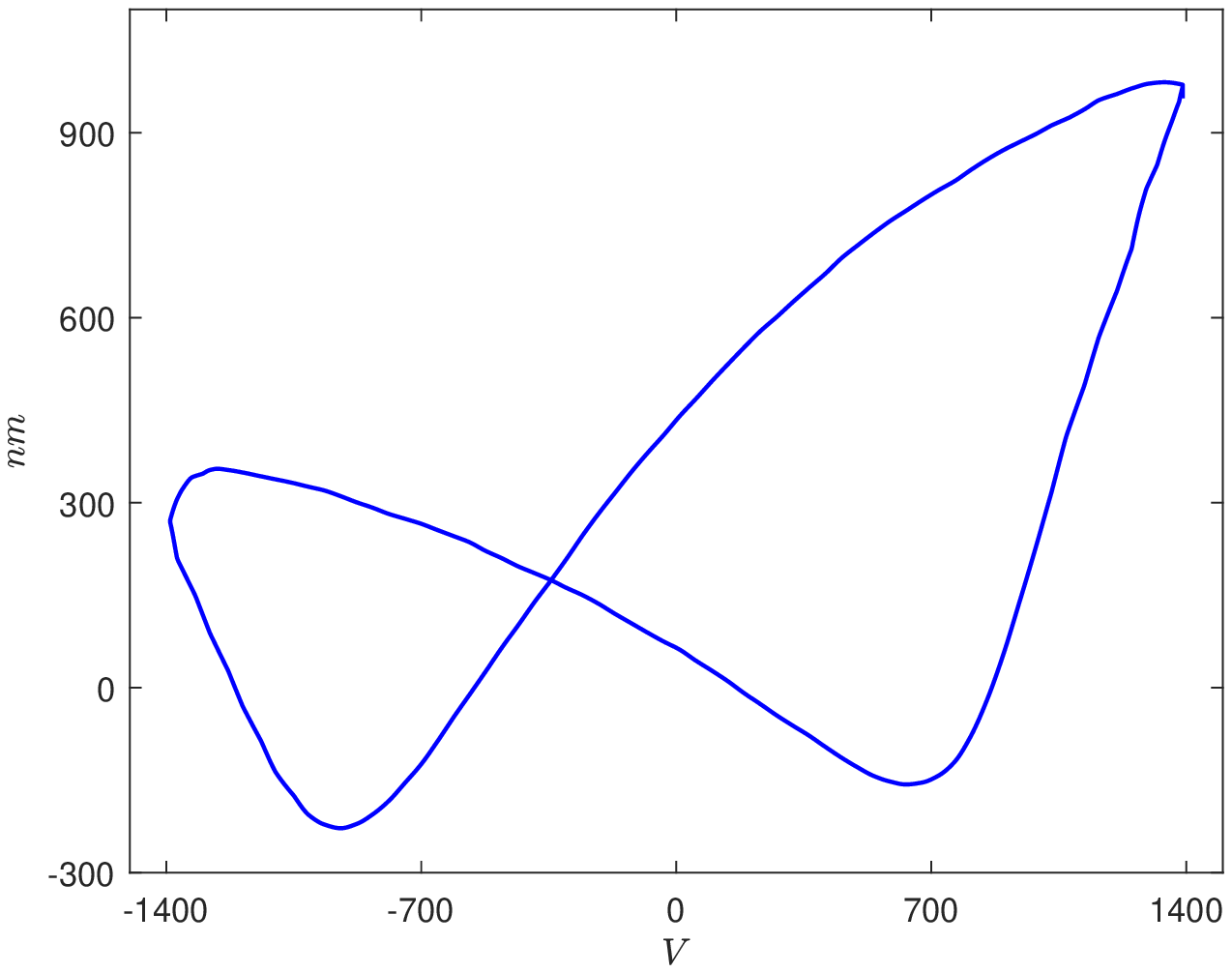}}
    \subfigure{\includegraphics[width=0.22\textwidth,trim={0 0.75cm 0 0.75cm},clip]{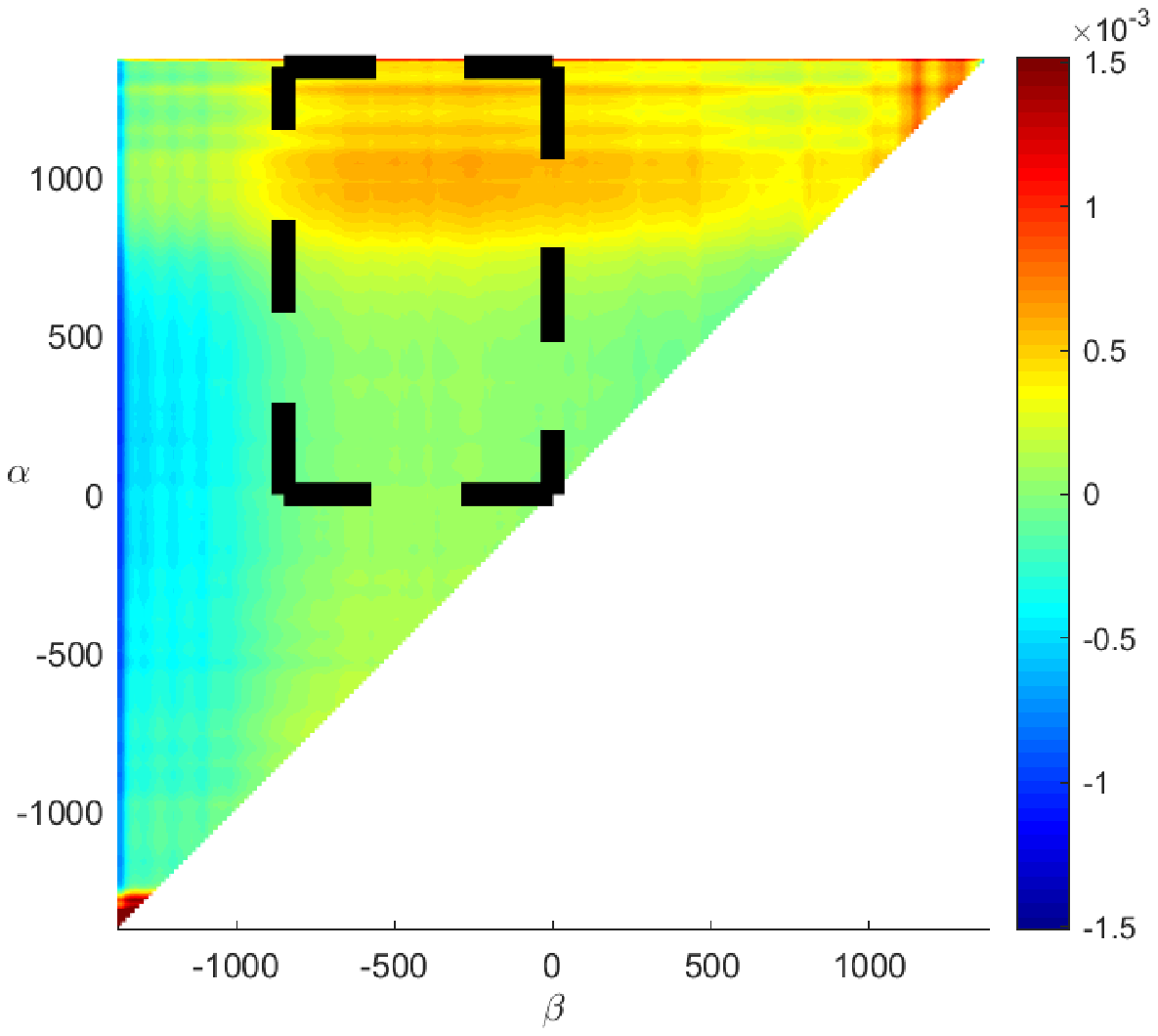}}
    \caption{Experimental butterfly hysteresis loop exhibited in the relation between voltage (V) and strain (nm) of a piezoelectric material and the corresponding weighting function $\mu$ of the fitted Preisach butterfly operator with the region $Q$ where $\mu$ is positive enclosed by the dashed line.}
    \label{fig:simulation_butterfly}
\end{figure}
\begin{figure}
    \centering
    \includegraphics[width=0.90\linewidth,trim={0 0.75cm 0 0.75cm},clip]{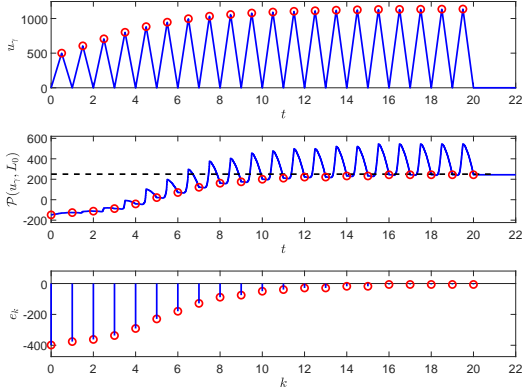}
    \caption{Simulation results for the first 20 steps of the algorithm controlling the remnant of a Preisach butterfly operator with an input $u_\gamma$ whose triangular pulses length is $\tau=1$. The upper plot shows the input $u_\gamma(t)$ where the amplitude $w_k$ of the $k$-th triangular pulse is marked in red. The middle plot corresponds to output $y(t)$ with the remnant $\gamma(w_k,I_k)$ marked in red. The bottom plot shows the remnant error $e_k=\gamma(w_k,I_k)-\gamma_d$.}
    \label{fig:simulation_results}
\end{figure}%

\section{CONCLUSIONS}\label{sec:conclusions}
In this work we presented a formulation for the problem of controlling the {\em remnant} of a system with hysteresis modeled by a Preisach operator. Using 
train of triangular pulses as the kernel of the remnant control input $u$, we analyze the properties of output remnant sequences due to the application of this family of input signals to the Preisach operator. Subsequently, we present recursive algorithm to update the amplitude of the triangular pulse sequences that guarantees the convergence of the output remnant sequence to a desired remnant value under some mild conditions.

\addtolength{\textheight}{-3cm}   

\section{ACKNOWLEDGMENTS}

We would like to thank prof. M. Acuautla and prof. B. Noheda of the University of Groningen for providing us with the experimental data of the piezoelectric material. 



\bibliographystyle{IEEEtran}
\bibliography{library}

\end{document}